\newtheorem{lemma}{Lemma}
\newtheorem{theorem}{Theorem}
\theoremstyle{definition}
\title{Geometric Spanning Cycles in Bichromatic Point Sets}
\author{B. Joeris \thanks{University of Waterloo, Waterloo, Canada. Email:
bjoeris@uwateloo.ca}
\and I. Urrutia \thanks{University of Waterloo, Waterloo, Canada. Email:
ihurruti@uwaterloo.ca} \and J. Urrutia \thanks{Instituto de Matem{\'a}ticas, UNAM,
Mexico.
Email:
urrutia@matem.unam.mx. Research supported by project number 178379 Conacyt, M{\'e}xico.}}
\date{}
\begin{document}
\maketitle

\begin{abstract}
Given a set of points in the plane each colored either red or blue, we find
non-self-intersecting geometric spanning cycles of the red points and of the
blue points such that each edge of the red spanning cycle is crossed at most
three times by the blue spanning cycle and vice-versa.
\end{abstract}

\section{Introduction}
A \emph{geometric graph} is a graph embedded in the plane
with edges that are straight-line segments. A set of points is in general position if no three points of the set are collinear.
In this paper, a bichromatic point set is a finite set of points $S$ in general position, partitioned into two disjoint color classes $S_R$ and $S_B$ (red and blue.)

Several problems have been studied that involve finding geometric graphs on sets
of red and blue points. 
Alternating paths in bichromatic point sets in convex position were studied in
\cite{akiyama1990simple}. Alternating paths in general position were
studied in \cite{abellanas1999bipartite}.
Alternating paths in points with more
than two colors were studied in \cite{merino2006length}.
\cite{tokunaga1996intersection} examined non-self-intersecting geometric spanning trees of the red points and the blue
points and found a tight bound on the minimum number of intersection points
between the red and blue spanning trees.
\cite{Kano2005301} considered the case of more than two colors, and studied the
 number of intersections for monochromatic spanning trees and for
monochromatic spanning cycles.
\cite{merino2005intersection} obtained a tight bound on the number of
intersections in monochromatic perfect matchings.
\cite{kano2013discrete} looked at points and lines in the plane lattice, and
studied the number of crossings for alternating matchings and monochromatic
spanning trees.

In \cite{tokunaga1996intersection}, Tokunaga also showed that there exist
non-intersecting geometric spanning paths $P_R$ and $P_B$ of the red and blue points respectively, such that each
edge of $P_R$ is crossed at most once by $P_B$, and vice-versa. 

One may then wonder if a similar result is possible for cycles; i.e., is it possible to construct spanning cycles with ``few'' intersections on bichromatic point sets. In particular, we wonder for what values of $k$ does the
following statement hold: there exist non-intersecting geometric spanning
cycles $C_R$ and $C_B$ of the red and blue points respectively, such that each
edge of $C_R$ is crossed at most $k$ times by $C_B$, and vice-versa.
In \cite{tokunaga1996intersection}, Tokunaga conjectured that this statement is
true when $k=2$.
It is easy to see that the statement is false with $k=1$.
We show here that the statement is true when $k=3$.

\begin{theorem}
\label{thm:main}
Given any bichromatic point set in general position, there exists a non-self-intersecting geometric spanning cycle of the red points and a
non-self-intersecting geometric spanning cycle of the blue points such that
each edge of the red spanning cycle is crossed by at most three edges of the blue spanning cycle, and vice-versa.
\end{theorem}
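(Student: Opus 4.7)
The plan is to leverage Tokunaga's spanning-path theorem \cite{tokunaga1996intersection}, which supplies spanning paths $P_R$ of $S_R$ and $P_B$ of $S_B$ such that each edge of one is crossed at most once by any edge of the other. I would close each path into a Hamiltonian cycle by adding a single chord: $e_R = r_s r_t$ between the endpoints of $P_R$, and $e_B = b_u b_v$ between those of $P_B$. Setting $C_R := P_R \cup \{e_R\}$ and $C_B := P_B \cup \{e_B\}$, a simple bookkeeping gives the desired bound as long as $e_R$ is crossed by $P_B$ at most twice and $e_B$ by $P_R$ at most twice: any original edge of $P_R$ is then crossed at most once by $P_B$ (by Tokunaga) plus at most once by $e_B$, totalling $2$; and $e_R$ is crossed at most twice by $P_B$ plus at most once by $e_B$, totalling $3$. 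Symmetry handles $C_B$.

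The substance of the proof therefore lies in producing a Tokunaga pair whose closing chords each cross the opposite path at most twice. My plan is to strengthen the construction by forcing the endpoints to lie in controlled positions on $\cvx(S)$. One natural route is to pick a red point $r^* \in \cvx(S)$ and a blue point $b^* \in \cvx(S)$, apply Tokunaga's theorem to obtain paths with $r^*$ and $b^*$ as endpoints, and arrange via a preprocessing step that the other two endpoints sit adjacent to $r^*, b^*$ on the hull. With the endpoints on $\cvx(S)$, the chord $e_R$ is a hull chord of $S_R$ whose interior has simple geometric interaction with $\cvx(S_B)$, giving a structural reason why only a bounded number of blue edges can cross it. Local edge flips, in the spirit of alternating-path flip arguments \cite{abellanas1999bipartite}, can then be used to eliminate any configuration in which $e_R$ is crossed three or more times.

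The main obstacle will be verifying that these local flips terminate in a pair of paths in which \emph{both} closing chords satisfy the two-crossing condition \emph{without} destroying Tokunaga's single-crossing guarantee on the interior edges. Every swap made to relieve crossings of $e_R$ may alter which edges of $P_B$ cross which edges of $P_R$, so a monovariant or potential-function argument will be needed to guarantee progress. Parity considerations (a segment with endpoints off a simple closed curve crosses it an even number of times when both endpoints lie on the same side) will likely constrain the possible crossing counts and suggest the right invariant to maintain. Once this strengthened Tokunaga statement is in place, the arithmetic above immediately produces the $3$-crossings bound and completes the proof of Theorem~\ref{thm:main}.
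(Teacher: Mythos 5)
Your reduction arithmetic is fine as far as it goes: if you could produce a Tokunaga pair $P_R,P_B$ (each edge of one crossed at most once by the other) whose two closing chords each cross the opposite \emph{path} at most twice, then every path edge is crossed at most $1+1=2$ times and each chord at most $2+1=3$ times. But the entire geometric content of the theorem has been pushed into that italicized hypothesis, and you do not prove it. ``Local edge flips in the spirit of alternating-path arguments'' together with ``a monovariant or potential-function argument will be needed'' is a research plan, not a proof; you explicitly flag that you do not know how to show the flips terminate without destroying the single-crossing guarantee on interior edges, and nothing in the proposal closes that loop.

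Worse, the concrete preprocessing you propose appears to be impossible in general. If the second endpoint of $P_R$ sits adjacent to $r^*$ on $\cvx(S)$ (and likewise for $P_B$), then each closing chord is an edge of the convex hull of the \emph{whole} point set, so no segment spanned by $S$ can cross it at all; combined with Tokunaga's guarantee this would give spanning cycles with every edge crossed at most once, i.e.\ the $k=1$ version of the statement, which the paper observes is false (and indeed Tokunaga's theorem does not let you prescribe endpoints this way). So either your preprocessing cannot be achieved, or your accounting proves too much; either way the argument as stated cannot be repaired by minor adjustments. For comparison, the paper does not go through Tokunaga paths at all: it chooses a center point $p$ avoiding ``monster-jumps,'' decomposes the radial order about $p$ into monochromatic blobs, connects consecutive blobs by a jump configuration, and fills in each blob with a spanning path, controlling crossings blob by blob. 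If you want to pursue your route, the key missing lemma --- existence of a Tokunaga pair whose closing chords each cross the opposite path at most twice --- needs a genuine proof, and you should expect it to be at least as hard as the theorem itself.
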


\subsection{Definitions and Notation}
If $X$ is a set of points and $y$ is a point outside the convex hull of $X$, we say that $y$ \emph{sees} a point $x\in X$ (with respect to $X$), if the line segment $(x,y)$ intersects the convex hull of $X$ only at $x$.
In other words, if the convex hull of $X$ were opaque, then $y$ could see $x$.
In particular, $x$ must be a vertex of the convex hull in order for $y$ to see it.
If $X$ and $Y$ are two sets of points with disjoint convex hulls, and $x\in X$ and $y\in Y$, we say that $x$ and $y$ \emph{see each other} (with respect to $X$ and $Y$) if $y$ sees $x$ with respect to $X$ and $x$ sees $y$ with respect to $Y$.

Let $X$ be a set of points in the plane and $p\not\in X$.
The \emph{radial order} of $X$ about $p$ is the cyclic  list $(x_1, x_2, \dots, x_n)$ taking all the
elements $x\in X$ ordered clockwise around $p$.
The \emph{interval between $x_i$ and $x_j$ in the radial order} is the set consisting of $\{x_i,x_{i+1},\ldots,x_j\}$, if $i\leq j$, or $\{x_i,x_{i+1},\ldots,x_n,x_1,\ldots,x_j\}$ if $j<i$.
If $y$ is in the interval between $x_i$ and $x_j$ in the radial order, then we say that $y$ lies \emph{between $x_i$ and $x_j$ in the radial order}.

If $R$ and $B$ are disjoint sets of red and blue points, respectively, then a
\emph{blob} in the radial order of $R\cup B$ about $p$ is a maximal
monochromatic set of consecutive elements in the radial order.
Because the blobs are monochromatic, it is natural to speak of \emph{red blobs} and \emph{blue blobs}, containing red and blue points respectively.

Given a radial order $(x_1,\ldots,x_n)$ of $R\cup B$ with respect to a point $p\not\in R\cup B$, for any $i\in\{1,\ldots,n\}$, $x_{i-1}$ is called the point \emph{before} $x_i$ and $x_{i+1}$ is called the point \emph{after} $x_i$, where the indices are taken modulo $n$.
For any blob $X$ in this radial ordering, the \emph{first point of $X$} is the point $x_i\in X$ such that $x_{i-1}\not\in X$.
Similarly, the \emph{last point of $X$} is the point $x_i \in X$ such that $x_{i+1}\not\in X$.
We say that $Y$ is the \emph{previous red (resp.\ blue) blob before} $X$ if $Y$ is a blue blob and there is no red (resp.\ blue) point between the last point of $Y$ and the first point of $X$ in the radial order.
Similarly, we say that $Z$ is the \emph{next red (resp.\ blue) blob after} $X$ if there is no red (resp.\ blue) point between the the last point of $X$ and the first point of $Z$ in the radial order.

\subsection{Jump Configurations}
If $X$ is a blob, and $Y$ is the next blob after $X$ of the same color as $X$, then a \emph{jump edge} from $X$ to $Y$ is a line segment $(x,y)$ where $x\in X$, $y\in Y$ such that $x$ and $y$ see each other with respect to $X$ and $Y$ and the angle between $x$ and $y$ with respect to the point $p$ is at most $\pi$.
This last condition ensures that each point $z$ on the line segment $(x,y)$ would, if added to the radial order, lie between $x$ and $y$. See Figure \ref{fig:jump-edges}.

\begin{figure}[h]
\centering
\includegraphics[width=5cm]{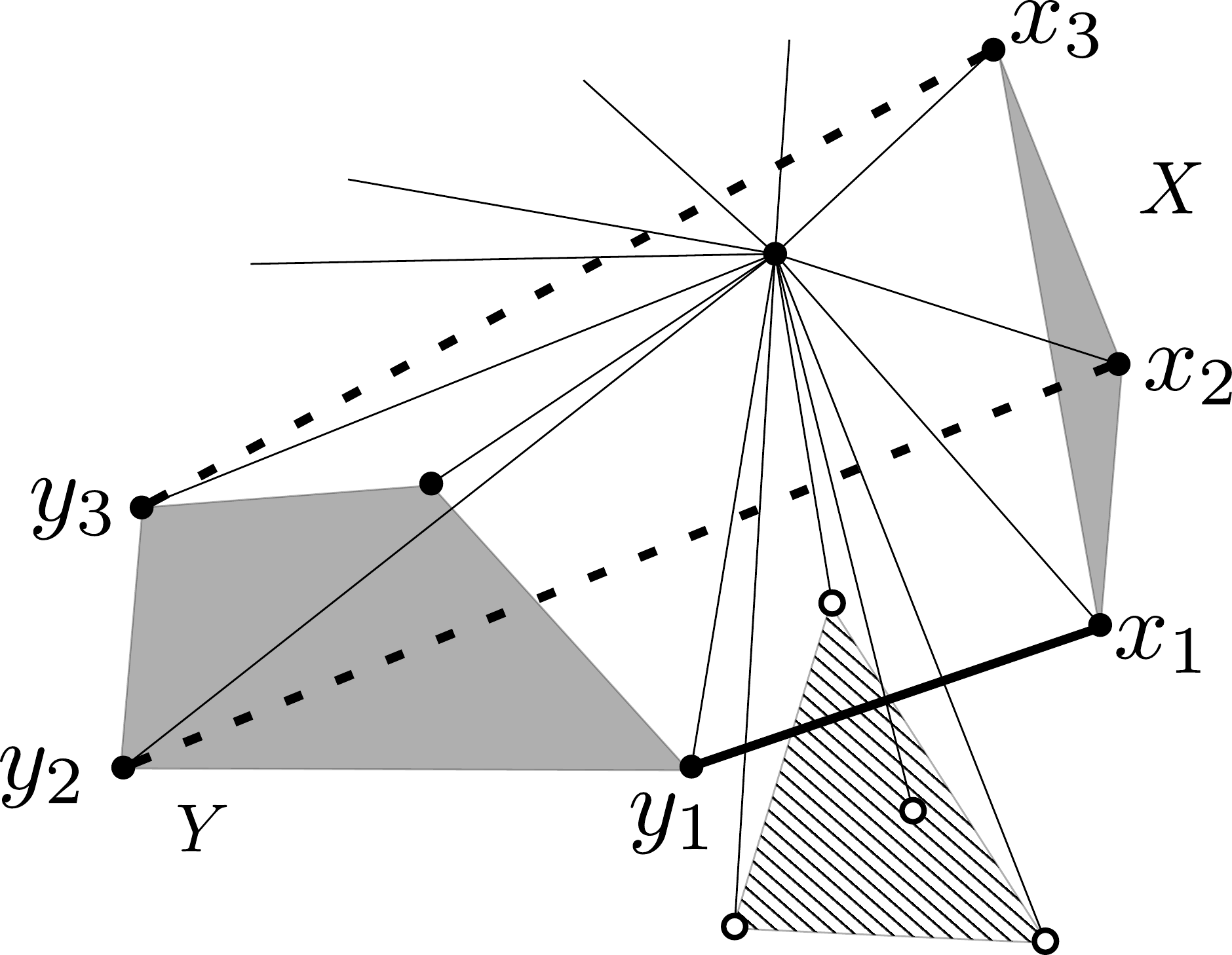}
\caption{Jump edges from a blob $X$ to the next blob of the same color, $Y$. $x_1$ to $y_1$ is a jump edge. $x_2$ to $y_2$ is not a
jump edge; it intersects the interior of the convex hulls of the blobs
containing $x_2$ and $y_2$.
$x_3$ to $y_3$ is not a jump edge because the angle from $x_3$ to $y_3$ is
greater than $\pi$.}
\label{fig:jump-edges}
\end{figure}

A \emph{jump configuration} $J$ is a collection of jump edges, one blue edge from each blue blob to the next  blue blob and one red edge from each red blob to the next red blob, such that the blue edges do not cross each other, the red edges do not cross each other, and, for each blob $X$, the two edges with endpoints in $X$ do not share a common endpoint unless $X$ contains only one point.

In this section we show that a jump configuration can be completed to a pair of non-intersecting spanning cycles of the two color classes by adding a spanning path within each blob, such that each edge of these cycles will be crossed at most three times, except where a specific structure, called a 4-forcing, appears.
4-forcings will be defined later in this section.

\begin{lemma}
  \label{lem:jump-config-blob-order}
  If there is a jump configuration $J$ in the radial order with respect to $p$, then for all blobs $X$, the angle from the first point in $X$ to the last point in $X$ is strictly less than $\pi$.
  Hence, for any $z$ in the convex hull of $X$, if $z$ were added to the radial order, it would lie in the interval from the first point in $X$ to the last point in $X$, which implies that the convex hulls of the blobs are disjoint.

  \begin{proof}
    Let $(b,a)$ be the jump edge in $J$ between the blob before $X$ and the blob after $X$.
    By the definition of a jump edge, the angle from $b$ to $a$ is strictly less than $\pi$.
    Also, $X$ is contained in the interval between $b$ and $a$ in the radial order, so the angle from the first point in $X$ to the last point in $X$ is strictly less than the angle from $b$ to $a$, which is less than $\pi$.
  \end{proof}
\end{lemma}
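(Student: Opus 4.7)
The plan is to leverage the jump edge that ``passes over'' the blob $X$. Since blobs alternate in color, the blob immediately before $X$ and the blob immediately after $X$ in the radial order have the color opposite to $X$, and they are consecutive blobs of that color (no same-colored blob lies between them). Hence the jump configuration $J$ contains a jump edge $(b,a)$ connecting some $b$ in the blob before $X$ to some $a$ in the blob after $X$.

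First I would unpack what the jump-edge definition gives us. By definition, the angle at $p$ between $b$ and $a$ is at most $\pi$, and any point on the segment $(b,a)$ would, if inserted into the radial order, fall in the interval from $b$ to $a$. Since the blob before $X$ precedes $X$ and the blob after $X$ follows $X$ in the radial order, every point of $X$ lies strictly between $b$ and $a$ in the radial order. In particular both the first and the last point of $X$ are strictly interior to the arc from $b$ to $a$, so the angle at $p$ from the first to the last point of $X$ is \emph{strictly} less than the angle from $b$ to $a$, which is at most $\pi$. This gives the first conclusion.

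For the second conclusion I would argue that once the angle subtended by $X$ at $p$ is strictly less than $\pi$, the convex hull of $X$ lies inside the closed angular wedge at $p$ bounded by the rays through the first and last points of $X$. Thus any point $z$ in $\cvx(X)$ has its ray from $p$ lying strictly between those two bounding rays (or coinciding with one of them only when $z$ is on the boundary of the wedge). Consequently, inserting $z$ into the radial order places it in the interval between the first and last points of $X$, as claimed. Disjointness of convex hulls then follows immediately: the angular wedges of distinct blobs cover disjoint intervals of the radial order about $p$, so the wedges, and in particular the convex hulls sitting inside them, cannot overlap.

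The only subtle point is tightening ``at most $\pi$'' in the jump-edge definition to ``strictly less than $\pi$'' for the angle spanned by $X$; this is handled by noting that $X$ is non-empty and sits strictly between $b$ and $a$ in the radial order, so the first and last points of $X$ cannot reach the endpoints of the $b$-to-$a$ arc. No other step looks delicate.
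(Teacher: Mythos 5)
Your proposal is correct and follows essentially the same route as the paper: identify the jump edge $(b,a)$ spanning from the blob before $X$ to the blob after $X$, and note that $X$ sits strictly inside the arc from $b$ to $a$, whose angle is bounded by $\pi$. In fact you are slightly more careful than the paper on one point — the jump-edge definition only guarantees the angle from $b$ to $a$ is \emph{at most} $\pi$, and you correctly recover the strict inequality for $X$ from the fact that the first and last points of $X$ lie strictly interior to that arc, whereas the paper asserts strictness directly.
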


The following two lemmas describe how jump edges in a jump configuration can intersect, and how jump edges can intersect the convex hulls of blobs.

\begin{lemma}
  \label{lem:jump-config-crossing}
  If $X_0$ is a blob, and, for $i\in\{1,2,3,4\}$, $X_i$ is the next blob after $X_{i-1}$, then for any jump configuration $J$, the only jump edges in $J$ that can cross the jump edge from $X_1$ to $X_3$ are the jump edges from $X_0$ to $X_2$ and from $X_2$ to $X_4$.
  In particular, no jump edges of the same color cross, and each jump edge is crossed by at most two jump edges of the opposite color.

  \begin{proof}
    Let $(x,y)$ be the jump edge from $X_1$ to $X_3$ in the jump configuration $J$, and let $(x',y')$ be any jump edge in $J$ such that $(x,y)$ and $(x',y')$ meet at a point $z$.
    By construction, if added to the radial order, $z$ would lie between $x$ and $y$, and between $x'$ and $y'$.
    Therefore the interval $I$ between $x$ and $y$ and the interval $I'$ between $x'$ and $y'$ intersect.
    If $I'\supseteq I$, then $(x',y')$ must be a jump edge from $X_1$ to $X_3$, so $(x',y')=(x,y)$.
    Otherwise, either $x'\in I$ or $y'\in I$, so, without loss of generality, $x'$ lies between $x$ and $y$ in the radial order.
    Therefore $x'\in X_1 \cup X_2 \cup X_3$.
    If $x'\in X_1$ and $y'$ is in the previous blob of the same color, then $(x,y)$ and $(x',y')$ do not cross by construction, and similarly if $x'\in X_3$ and $y'$ is in the next blob of the same color.
    If $x'\in X_2$, then $(x',y')$ is either the jump edge from $X_0$ to $X_2$ or the jump edge from $X_2$ to $X_4$.
  \end{proof}
\end{lemma}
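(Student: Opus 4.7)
The plan is to take an arbitrary jump edge $(x', y')$ in $J$ that meets the jump edge $(x, y)$ from $X_1$ to $X_3$ at a point $z$, and to show by case analysis on where the endpoints $x'$ and $y'$ sit in the radial order about $p$ that $(x', y')$ is forced to be one of $(x, y)$, the jump edge from $X_0$ to $X_2$, or the jump edge from $X_2$ to $X_4$.

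The first step is to promote the geometric fact that $(x, y)$ and $(x', y')$ meet into a combinatorial statement about the radial order. Because each jump edge has angular width less than $\pi$ at $p$, the last condition in the definition of a jump edge guarantees that the crossing point $z$, if we were to insert it into the radial order, would lie in the interval $I$ between $x$ and $y$ and also in the interval $I'$ between $x'$ and $y'$. Hence $I \cap I' \neq \emptyset$, and so either one of these intervals contains the other, or some endpoint of one of the edges lies strictly inside the other interval.

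Now I would split into cases. If $I' \supseteq I$, then $x'$ is at or before $x$ and $y'$ is at or after $y$ in the cyclic order, which, combined with the requirement that $(x', y')$ joins a blob to the \emph{next} same-color blob (i.e.\ skips exactly one blob), pins $(x', y')$ down to be a jump edge from $X_1$ to $X_3$; uniqueness of such an edge in a jump configuration then gives $(x', y') = (x, y)$. Otherwise I may assume by symmetry that $x' \in I$, so $x' \in X_1 \cup X_2 \cup X_3$. The key sub-case is $x' \in X_2$: then $(x', y')$ is the jump edge from $X_2$ to one of its two same-color neighbours, which are exactly $X_0$ and $X_4$, yielding the two permitted crossings. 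If $x' \in X_1$ or $x' \in X_3$, then $(x', y')$ has the same color as $(x, y)$, so the non-crossing clause in the definition of a jump configuration already forbids them from crossing and the case is vacuous.

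The main difficulty I anticipate is not any single step but keeping the bookkeeping of intervals, blobs, and colors straight; in particular one must be careful that the interval-containment case really does force $(x', y')$ to connect $X_1$ to $X_3$ rather than some blob further out, which relies on the fact that jump edges skip exactly one blob. The two ``in particular'' statements are then immediate corollaries: same-color crossings are ruled out by the jump configuration definition itself, and the opposite-color bound of two follows because the only candidates are the two edges incident to $X_2$.
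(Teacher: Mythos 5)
Your proposal is correct and follows essentially the same route as the paper's proof: reduce the geometric crossing to an intersection of the two radial-order intervals, handle the containment case by uniqueness of the jump edge from $X_1$ to $X_3$, and otherwise locate an endpoint in $X_1\cup X_2\cup X_3$, with the $X_1$/$X_3$ sub-cases dismissed via the same-color non-crossing clause and the $X_2$ sub-case yielding the two permitted edges. No substantive differences.
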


\begin{lemma}
  \label{lem:jump-config-cross-blob}
  If $X_0$ is a blob, and, for $i\in\{1,2,3,4\}$, $X_i$ is the next blob after $X_{i-1}$, then for any jump configuration $J$, the only jump edges that can intersect the convex hull of $X_2$ are the jump edges from $X_0$ to $X_2$, from $X_1$ to $X_3$, and from $X_2$ to $X_4$.

  \begin{proof}
    Let $(y_1,y_2)$ be the jump edge between blobs $Y_1$ and $Y_2$ in the jump configuration $J$ such that $(y_1,y_2)$ intersects the convex hull of $X_1$ at a point $z$.
    Then $z$, if added to the radial order, would lie between the first point of $X_1$ and the last point of $X_1$.
    But $z$ lies between a point in $Y_1$ and a point in $Y_2$ in the radial order, so $X_1$ must be $Y_1$, $Y_2$, or the blob between $Y_1$ and $Y_2$ in the radial order.
  \end{proof}
\end{lemma}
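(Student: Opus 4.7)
The plan is to use Lemma \ref{lem:jump-config-blob-order} applied to the middle blob $X_2$, together with the angle-at-most-$\pi$ condition built into the definition of a jump edge. These two facts together force any point of intersection to lie in a very narrow radial interval, and this localizes which blobs the offending jump edge can connect.

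Concretely, I would fix an arbitrary jump edge $(y_1,y_2)$ in $J$ between blobs $Y_1$ and $Y_2$, and assume it meets the convex hull of $X_2$ at a point $z$. By Lemma \ref{lem:jump-config-blob-order}, if $z$ were inserted into the radial order about $p$, it would land in the radial interval from the first to the last point of $X_2$. On the other hand, the jump-edge definition requires the angle at $p$ from $y_1$ to $y_2$ to be at most $\pi$, whose stated purpose is precisely to guarantee that every point on the segment $(y_1,y_2)$ — in particular $z$ — lies in the radial interval from $y_1$ to $y_2$. Hence these two radial intervals overlap.

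Since distinct blobs occupy disjoint radial intervals, the overlap forces $X_2$ to coincide with $Y_1$, with $Y_2$, or with a blob strictly between $Y_1$ and $Y_2$ in the radial order. Because $Y_1$ and $Y_2$ are consecutive blobs of the same color, blobs alternate colors between them and exactly one blob of the opposite color sits in between. From here the case analysis is immediate: if $X_2=Y_1$, then $Y_2$ is the next blob of the same color as $X_2$, namely $X_4$, giving the jump edge from $X_2$ to $X_4$; symmetrically $X_2=Y_2$ yields the jump edge from $X_0$ to $X_2$; and in the remaining case $Y_1$ and $Y_2$ must be the same-colored neighbors of $X_2$, namely $X_1$ and $X_3$.

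I do not expect a real obstacle here; the proof is essentially a bookkeeping argument once Lemma \ref{lem:jump-config-blob-order} is invoked. The one subtle point to watch is that the $\leq\pi$ angle bound on a jump edge is indispensable: without it, a point on the segment $(y_1,y_2)$ could fall into the ``long way around'' radial interval, and the localization argument would collapse. As long as that condition is used carefully, everything reduces to identifying which of the five named blobs $Y_1$ and $Y_2$ must be.
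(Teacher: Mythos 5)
Your proposal is correct and follows essentially the same argument as the paper: localize the intersection point $z$ radially via Lemma \ref{lem:jump-config-blob-order} on one hand and via the angle-at-most-$\pi$ condition on the jump edge on the other, then conclude that $X_2$ must be $Y_1$, $Y_2$, or the blob between them. You merely spell out the final blob-identification step in slightly more detail than the paper does.
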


When a jump edge passes through the convex hull of a blob $X$ (which can only happen when it is the jump edge from the previous blob before $X$ to the next blob after $X$), we want to find a spanning path within $X$ that crosses the jump edge as few times as possible. The following lemma gives a construction for such a path.

\begin{lemma}
  \label{lem:spanning-path-crossing}
  If $X$ is a finite set of points in general position, $x,y\in X$ are distinct points, and $\ell$ is a line, then there exists a non-self-intersecting spanning path $P$ of $X$ such that
  \begin{enumerate}[(i)]
  \item if $X$ lies entirely on one side of $\ell$, then $P$ does not cross $\ell$;
  \item if $x$ and $y$ lie on opposite sides of $\ell$, then $P$ crosses $\ell$ exactly once;
  \item if $x$ and $y$ lie on the same side of $\ell$, and $X$ contains points on both sides of $\ell$, then $P$ crosses $\ell$ exactly twice.
  \end{enumerate}
  
  \begin{proof}
    By induction on $n=|X|$.
    If $n=2$, then $X=\{x,y\}$, and so the one-edge path from $x$ to $y$ crosses $\ell$ zero times, if $x$ and $y$ are on the same side of $\ell$; or once if $x$ and $y$ are on opposite sides of $\ell$.

    If $n>2$, then $X\setminus\{x\}$ contains at least two points, and so $x$ sees at least two points of $X\setminus\{x\}$.
    One of these points, $x'$, is not $y$.
    If there are multiple choices for $x'$, choose $x'$ to lie on the same side of $\ell$ as $x$.
    By the induction hypothesis, there exists a spanning path $P'$ of $X\setminus\{v\}$ from $x'$ to $y$ satisfying (i) (ii) and (iii).
    $P'$ lies entirely within the convex hull of $X\setminus \{x\}$, and the edge from $x$ to $x'$ intersects the convex hull only at $x'$, so $P=P'\cup(x,x')$ is a non-self-intersecting spanning path of $X$ from $x$ to $y$.
    
    If $X$ lies entirely on one side of $\ell$, then $P$ lies entirely on one side of $\ell$ because it is contained in the convex hull of $X$, so (i) holds.

    Now suppose $x$ and $y$ lie on opposite sides of $\ell$.
    If $x'$ lies on the same side of $\ell$ as $x$ then $P$ crosses $\ell$ as many times as $P'$, which is once.
    If $x'$ lies on the opposite side of $\ell$ from $x$, then $X\setminus\{x\}$ cannot contain any point on the same side of $\ell$ as $x$, or else $x$ would see some point $x''$ on the same side of $\ell$, and $x''\neq y$ because $y$ is on the opposite side of $\ell$, so $x''$ would have been chosen over $x'$.
    Therefore when $x'$ lies on the opposite side of $\ell$ from $x$, $P'$ does not cross $\ell$, so $P$ crosses $\ell$ exactly once.
    
    Finally, suppose $x$ and $y$ lie on the same side of $\ell$, and that $X$ contains some point $z$ on the opposite side of $\ell$ from $x$ and $y$.
    If $x'$ lies on the same side of $\ell$ as $x$ and $y$, then $P$ crosses $\ell$ as many times as $P'$, which is twice.
    If $x'$ lies on the opposite side of $\ell$ from $x$ and $y$, then $P'$ crosses $\ell$ once, so $P$ crosses $\ell$ twice.
  \end{proof}
\end{lemma}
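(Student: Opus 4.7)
My plan is a straightforward induction on $n = |X|$. The base case $n = 2$ is immediate: $X = \{x, y\}$, and the single edge $(x, y)$ is the required path, crossing $\ell$ zero times if $x, y$ share a side and once otherwise; condition (iii) does not arise.

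For $n > 2$ I would peel off the endpoint $x$. Pick a neighbor $x' \in X \setminus \{x\}$ that $x$ sees with respect to $X \setminus \{x\}$, with $x' \neq y$, preferring a neighbor on the same side of $\ell$ as $x$ whenever such a preference can be satisfied. Invoke the induction hypothesis on $X \setminus \{x\}$ with endpoints $x'$ and $y$ to obtain a spanning path $P'$, and set $P = (x, x') \cup P'$. The new path is non-self-intersecting because $P'$ lies in the convex hull of $X \setminus \{x\}$ by convexity, while the edge $(x, x')$ meets that hull only at $x'$ by the visibility of $x'$ from $x$.

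The crossing count splits into cases. Case (i) is immediate since $P$ stays in the convex hull of $X$. In case (ii) ($x, y$ on opposite sides) I consider the two subcases of the selection rule: if $x'$ can be chosen on $x$'s side, the edge contributes no crossing and case (ii) of the induction hypothesis contributes one; otherwise every visible neighbor of $x$ lies opposite $x$, and a silhouette argument forces all of $X \setminus \{x\}$ to lie opposite $x$, namely, the visible arc from $x$ covers the portion of the convex hull of $X \setminus \{x\}$ closest to $x$, so if every vertex in that arc lies on the opposite side of $\ell$ the hull cannot extend across $\ell$ and every point of $X \setminus \{x\}$ (hull and interior) must be opposite $x$. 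Then case (i) applies to $P'$ and the edge supplies the single crossing. In case (iii) ($x, y$ on the same side, $X$ on both sides) no silhouette argument is needed: if $x'$ lies on $x$'s side, case (iii) applies to $P'$ (since $X \setminus \{x\}$ still straddles $\ell$), giving two crossings and none from the edge; if $x'$ lies opposite, case (ii) gives one crossing for $P'$ and the edge adds one more.

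The main obstacle I anticipate is guaranteeing that $x$ has at least two visible neighbors in $X \setminus \{x\}$, so that one distinct from $y$ is available. This holds precisely when $x$ is a vertex of the convex hull of $X$; I would either tighten the hypothesis to require this or verify it in each application (choosing the endpoints $x, y$ as extreme points of a blob in the intended use). Beyond this, the only geometric content is the silhouette argument, and the rest is bookkeeping.
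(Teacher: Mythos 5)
Your proof follows essentially the same route as the paper's: induction on $|X|$, peeling off $x$ via a visible neighbour $x'$ chosen on $x$'s side of $\ell$ when possible, then the same three-way crossing count; your silhouette argument merely makes explicit the step the paper asserts tersely (that if every visible neighbour of $x$ lies across $\ell$ then all of $X\setminus\{x\}$ does). The obstacle you flag --- that $x$ must be a vertex of the convex hull of $X$ for ``$x$ sees at least two points of $X\setminus\{x\}$'' to make sense --- is a genuine omission in the paper's statement and proof as well, but it is harmless in context, since in the only application the endpoints $a_X,b_X$ are endpoints of jump edges and hence hull vertices of the blob, and the induction preserves this property for the new endpoint $x'$.
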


We now show how to construct a pair of monochromatic spanning cycles from a jump configuration by adding spanning paths within each blob.

The next lemma shows that there is exactly one case in which adding spanning paths can force our monochromatic spanning cycles to have edges that are crossed 4 times. We call this case a \emph{4-forcing} and define it as follows:
Suppose we are given a jump configuration $J$. Let $X_0,Y_1,X_1,Y_2,X_2$ be consecutive blobs such that the jump edge $(y_1,y_2)$ between $Y_1$ and $Y_2$ in $J$ intersects the convex hull of $X_1$ and crosses both the jump edge from $X_0$ to $X_1$ and the jump edge from $X_1$ to $X_2$, and such that the endpoints in $X_1$ of the jump edges from $X_0$ and to $X_2$ both lie on the same side of $(y_1,y_2)$.
Then this is called a \emph{4-forcing} in $J$, and $(y_1,y_2)$ is called the \emph{center edge} of the 4-forcing.
See Figure \ref{fig:4-forcing}.

\begin{figure}[h]
\centering
\includegraphics[width=6cm]{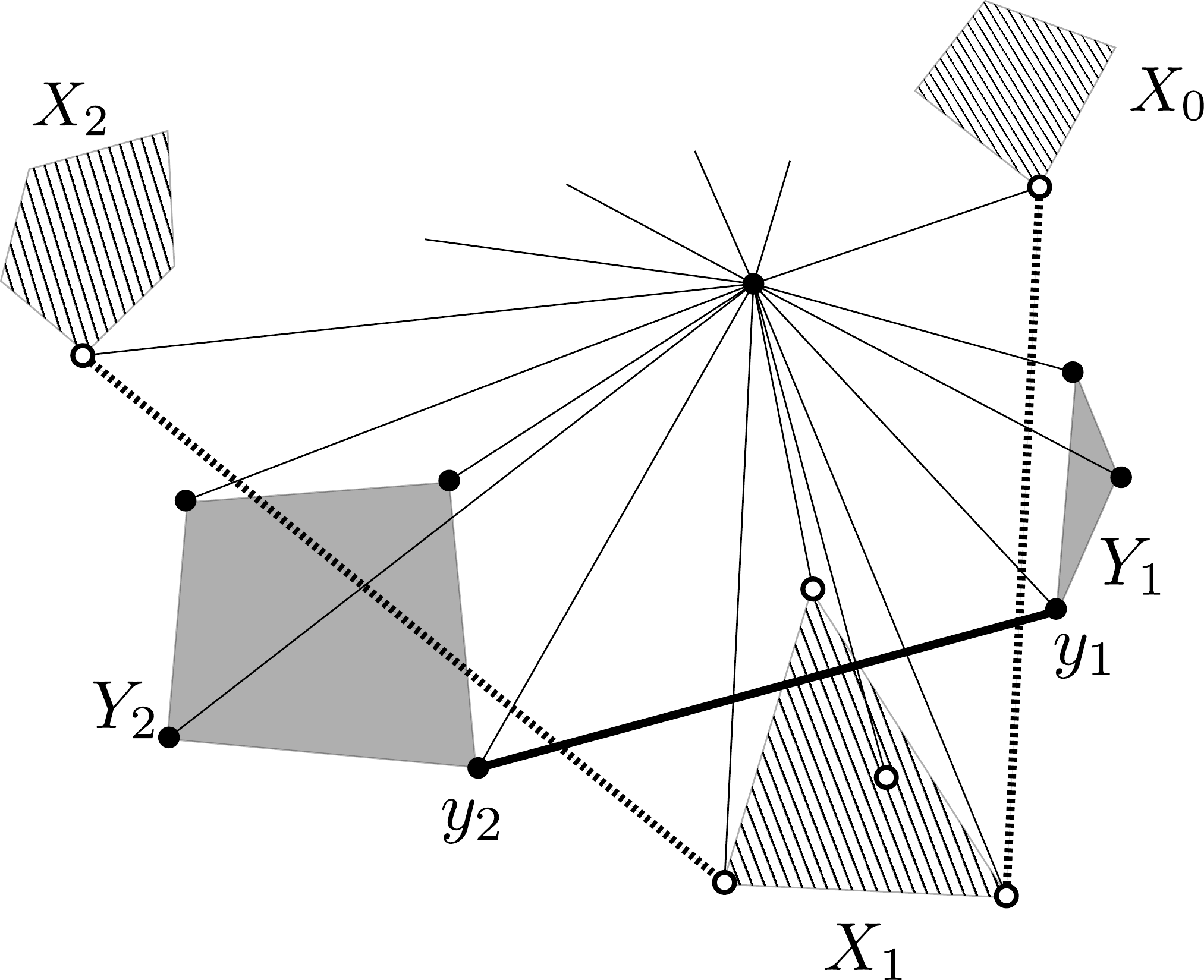}
\caption{4-forcing}
\label{fig:4-forcing}
\end{figure}

\begin{lemma}
  \label{lem:complete-jump-config}
  Given any jump configuration $J$,
  spanning paths of each blob can be added to $J$ to construct a pair of non-self-intersecting spanning cycles of the red and blue points respectively such that, if an edge $e$ is crossed more than three times by the opposite color cycle, then $e$ is in $J$ (and not in one of the spanning paths), and $e$ is the center edge of a 4-forcing.
  \begin{proof}
    For each blob $X$, let $a_X\in X$ be the point incident with the jump edge to $X$ from the previous blob of the same color, and let $b_X\in X$ be the point incident with the jump edge from $X$ to the next blob of the same color.
    If $W$ is the previous blob before $X$ and $Y$ is the next blob after $X$, let $\ell_X$ be the line through $b_W$ and $a_Y$, and note that $(b_W,a_Y)$ is the only jump edge that can cross through the convex hull of $X$.
    Let $P_X$ be a spanning path of $X$ from $a_X$ to $b_X$ minimizing the number of crossings of the $\ell_X$.

    By Lemma \ref{lem:jump-config-cross-blob}, no jump edge of the same color as $X$ can cross the convex hull of $X$, and hence no jump edge can cross $P_X$.
    Also, for any blob $X'\neq X$, the convex hulls of $X'$ and $X$ are disjoint, so $P_X$ and $P_{X'}$ cannot cross.
    By Lemma \ref{lem:jump-config-crossing}, two jump edges of the same color cannot cross, so the edges of  $J\cup \bigcup\limits_{\emph{all blobs $X$}}P_X$ form a pair of non-self-intersecting spanning-cycles.

    Let $e$ be an edge of one of these spanning cycles which is crossed at least 4 times by the other cycle.
    By Lemma \ref{lem:jump-config-cross-blob}, any edge of $P_X$ can only be crossed by the jump edge $(b_W,a_Y)$, and hence cannot be crossed 4 times, so $e$ is a jump edge.
    Without loss of generality, suppose $e$ is the jump edge between $W$ and $Y$.
    By Lemma \ref{lem:jump-config-crossing}, $e$ is crossed by at most two other jump edges (the two jump edges to and from $X$).
    The only blob for which $e$ could cross the convex hull is $X$, so the only blob whos spanning-path $e$ could cross is $P_X$.
    By Lemma \ref{lem:spanning-path-crossing}, $P_X$ crosses $e$ at most two times, with equality only if $a_X$ and $b_X$ lie on the same side of $\ell_X$.
    Therefore $e$ must be crossed by both of the jump edges incident with $X$, so $e$ is the center edge of a 4-forcing.
  \end{proof}
\end{lemma}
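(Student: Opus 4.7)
The plan is to build, for each blob $X$, a non-self-intersecting spanning path $P_X$ inside the convex hull of $X$ that connects the two points of $X$ incident with the two same-color jump edges of $J$ at $X$; the union of $J$ with all these $P_X$ will then be the desired pair of cycles. Concretely, I would let $a_X$ be the endpoint in $X$ of the jump edge arriving at $X$ from the previous same-color blob and $b_X$ be the endpoint in $X$ of the jump edge leaving $X$; by Lemma~\ref{lem:jump-config-cross-blob} the only jump edge that can pierce the convex hull of $X$ is the one between the previous and next opposite-color blobs, so I would let $\ell_X$ be the line through that edge and apply Lemma~\ref{lem:spanning-path-crossing} to obtain $P_X$ from $a_X$ to $b_X$ that crosses $\ell_X$ a minimum number of times (in particular at most twice, and exactly twice only when $a_X$ and $b_X$ lie on the same side of $\ell_X$).

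Next I would verify that $J \cup \bigcup_X P_X$ really is a pair of non-self-intersecting spanning cycles. Paths in distinct blobs cannot cross because Lemma~\ref{lem:jump-config-blob-order} gives disjoint convex hulls; same-color jump edges do not cross by Lemma~\ref{lem:jump-config-crossing}; and a jump edge of the same color as $X$ cannot enter the convex hull of $X$ (again Lemma~\ref{lem:jump-config-cross-blob}), so the same-color jump edges cannot cross $P_X$. This leaves only crossings between opposite-color edges to analyze.

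For the crossing count, suppose some edge $e$ is crossed more than three times. If $e$ belonged to some $P_X$, the unique opposite-color edge that can enter the convex hull of $X$ is the single jump edge associated with $\ell_X$, which as a straight segment crosses $e$ at most once, contradicting four crossings; so $e \in J$. Writing $e$ as the jump edge from blob $W$ to blob $Y$ and letting $X$ be the unique opposite-color blob between them, Lemma~\ref{lem:jump-config-crossing} bounds the jump-edge contribution to the crossings of $e$ by two (the two jumps incident with $X$), and Lemma~\ref{lem:jump-config-cross-blob} together with Lemma~\ref{lem:spanning-path-crossing} bounds the spanning-path contribution by two, coming only from $P_X$ and attained only when $a_X$ and $b_X$ lie on the same side of $e$. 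Reaching four crossings therefore forces both jumps incident with $X$ to cross $e$ and forces $a_X, b_X$ to be on the same side of $e$, which is exactly the definition of $e$ being the center edge of a 4-forcing.

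I expect the main subtlety to be matching the extremal case of the crossing count to the 4-forcing definition: I must identify the two points ``on the same side of $(y_1,y_2)$'' in the 4-forcing with $a_X$ and $b_X$, and confirm that Lemma~\ref{lem:spanning-path-crossing}'s ``same side'' condition on $P_X$ coincides with this. Since $a_X$ and $b_X$ are by construction the endpoints in $X$ of the two jumps incident with $X$, this identification is direct, and nothing further is needed beyond the three lemmas already proved.
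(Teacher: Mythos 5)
Your proposal is correct and follows essentially the same route as the paper's own proof: the same choice of $a_X$, $b_X$, and $\ell_X$, the same application of Lemmas \ref{lem:jump-config-blob-order}, \ref{lem:jump-config-crossing}, \ref{lem:jump-config-cross-blob}, and \ref{lem:spanning-path-crossing}, and the same accounting that four crossings force two jump-edge crossings plus two path crossings with $a_X,b_X$ on the same side of $\ell_X$, i.e.\ a 4-forcing.
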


In the remainder of the paper, we focus on finding a good jump configuration such that when we add spanning cycles of each blob, as in the previous lemma, we have no 4-crossings \textemdash this will result in a pair of monochromatic spanning cycles in which each edge is crossed at most 3 times. In the following sections we show that it is possible to avoid 4-crossings by choosing $p$ carefully.

\section{Monster-Jumps}

If $B_1,R_1,B_2,R_2$ are consecutive blobs such that $B_1$ and $B_2$ are blue and $R_1$ and $R_2$ are red, then we say that $R_1$ to $R_2$ is a \emph{red monster-jump} if $|R_1|>1$ and the angle from the \emph{second} point in $R_1$ to the \emph{first} point in $R_2$ is at least $\pi$, and the line segment between the last point in $B_1$ and the first point in $B_2$ intersects the convex hull of $R_1$. See Figure \ref{fig:red-monster-jump}.

If $B_1,R_1,B_2,R_2$ are consecutive blobs such that $B_1$ and $B_2$ are blue and $R_1$ and $R_2$ are red, then we say that $B_1$ to $B_2$ is a \emph{blue monster-jump} if $|B_2|>1$ and the angle from the \emph{last} point in $B_1$ to the \emph{second to last} point in $B_2$ (i.e., the point before the last point in $B_2$) is at least $\pi$, and the line segment between the last point in $R_1$ and the first point in $R_2$ intersects the convex hull of $B_2$. See Figure \ref{fig:blue-monster-jump}.

Note the slight asymmetry between the definition of red and blue monster-jump.
In particular, if the colors are reversed and the plane reflected, then red monster-jumps become blue monster-jumps, and vice-versa.

\begin{figure}
\centering
\begin{subfigure}[b]{0.4\textwidth}
\centering
\includegraphics[width=\textwidth]{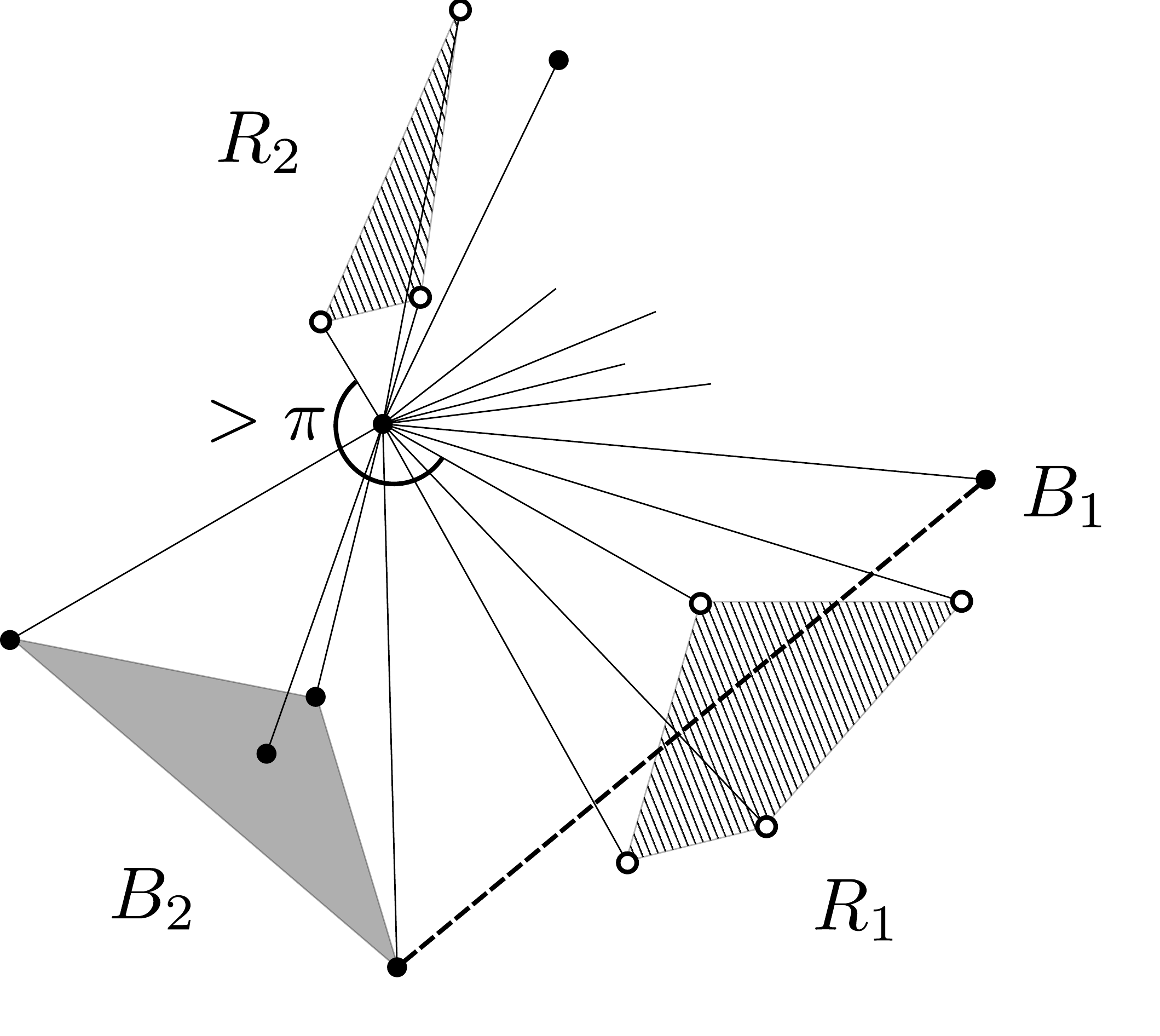}
\caption{Red monster-jump between $R_1$ and $R_2$.}
\label{fig:red-monster-jump}
\end{subfigure}
\quad
\begin{subfigure}[b]{0.4\textwidth}
\centering
\includegraphics[width=\textwidth]{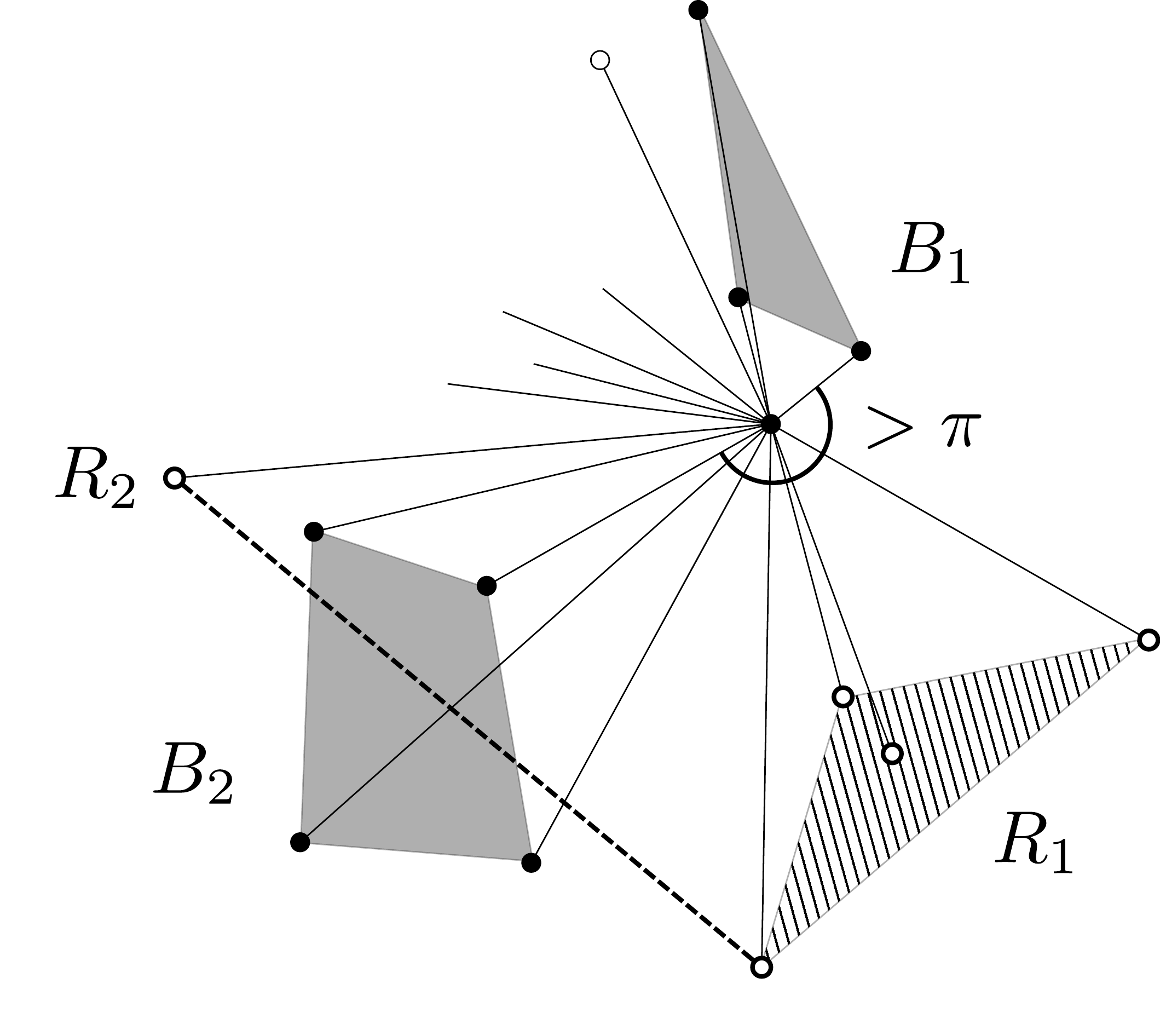}
\caption{Blue monster-jump between $B_1$ and $B_2$.}
\label{fig:blue-monster-jump}
\end{subfigure}
\caption{Red and blue monster-jumps}
\label{fig:monster-jumps}
\end{figure}

In this section we show that if $p$ is chosen such that the radial ordering about $p$ has no monster-jumps, then we can construct a jump configuration which can be completed to a pair of monochromatic spanning cycles with no 4-crossings. 

\subsection{Monster-Jumps and 4-forcings}
\begin{lemma}
  \label{lem:jump-config-exists}
  Suppose that, for each blob $X$, the angle from the last point in $X$ to the first point in the next blob of the same color is less than $\pi$. Then the collection of jump-edges consisting of, for each blob $X$, the edge from the last point of $X$ to the first point of the next blob of the same color, is a valid jump configuration.
  \begin{proof}
    For each blob $X$, let $a_X$ be the first point in $X$, and $b_X$ be the last point in $X$.
    Then for each blob $X$, if $X'$ is the next blob of the same color, then the angle from $b_X$ to $a_{X'}$ is less than $\pi$.
    Therefore every point on the line segment $(b_X,a_{X'})$ would, if added to the radial order, lie in the interval between $b_X$ and $a_{X'}$.
    If $Y$ is the blob before $X$ and $Y'$ is the blob after $X$, then $X$ is contained in the interval between the last point of $Y$ and the first point of $Y'$, which has angle less than $\pi$, so, for any point $z$ in the convex hull of $X$, if $z$ was added to the radial order, $z$ would lie between the first point of $X$ and the last point of $X$.
    Hence, $(b_X,a_{X'})$ can only intersect the convex hull of $X$ and $b_X$.
    Similarly, $(b_X,a_{X'})$ can only intersect the convex hull of $X'$ at $a_{X'}$.
    Therefore $(b_X,a_{X'})$ is a valid jump edge.

    If $X''$ is the next blob of the same color after $X'$, then the interval between $b_X$ and $a_{X'}$ and the interval between $b_{X'}$ and $a_{X''}$ are either disjoint, if $|X'|>1$, or meet at the point $a_{X'}=b_{X'}$, if $|X'|=1$.
    Therefore the line segments $(b_{X},a_{X'})$ and $(b_{X'},a_{X''})$ are either disjoint, if $|X'|>1$; or meet at $a_{X'}=b_{X'}$, if $|X'|=1$.
    Thus, the collection $J$ of jump edges consisting of, for each blob $X$ with next blob of the same color $X'$, $(b_X,a_{X'})$, is a valid jump configuration.
  \end{proof}
\end{lemma}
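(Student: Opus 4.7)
My plan is to verify directly that the proposed collection of edges satisfies the definition of a jump edge and together forms a valid jump configuration. First I would set notation: for each blob $X$, let $a_X$ denote the first point and $b_X$ the last point, and consider the candidate edge $(b_X,a_{X'})$, where $X'$ is the next same-color blob after $X$. The angle-at-most-$\pi$ condition in the definition of a jump edge is delivered by the hypothesis itself, so the real work lies in establishing visibility and then in checking the global non-crossing and shared-endpoint conditions.

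For visibility, I would establish two facts, each resting on an angular interval of span strictly less than $\pi$. First, since the angle from $b_X$ to $a_{X'}$ is less than $\pi$ by hypothesis, every point on the open segment $(b_X,a_{X'})$, if inserted into the radial order about $p$, lies strictly between $b_X$ and $a_{X'}$. Second, since blobs alternate in color, if $Y$ is the opposite-color blob immediately before $X$, then the blob immediately after $X$ is the next same-color blob after $Y$; applying the hypothesis to $Y$ yields angular span strictly less than $\pi$ for the radial interval from $b_Y$ to $a_{Y'}$, and this interval contains all of $X$. It follows that every point of $\cvx(X)$, if inserted into the radial order, lies strictly between $a_X$ and $b_X$. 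Combining these observations, the open segment $(b_X,a_{X'})$ cannot meet $\cvx(X)$ except possibly at the endpoint $b_X$; the symmetric argument around $X'$ disposes of $a_{X'}$, so the candidate is a valid jump edge.

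To finish, I would check the conditions that distinguish a jump configuration from just any collection of jump edges. Let $X''$ be the next same-color blob after $X'$. By the first visibility fact above, the two same-color edges $(b_X,a_{X'})$ and $(b_{X'},a_{X''})$ are contained in the radial intervals $[b_X,a_{X'}]$ and $[b_{X'},a_{X''}]$ respectively. These intervals are disjoint when $|X'|>1$, and share only the single point $a_{X'}=b_{X'}$ when $|X'|=1$, which matches exactly what the jump-configuration definition tolerates; applying the same principle to non-adjacent same-color pairs shows that no two same-color edges cross.

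The main obstacle I expect is the second visibility fact, that $\cvx(X)$ fits inside the radial interval spanned by $X$. This cannot be derived from the hypothesis applied to $X$ itself, only from its application to the opposite-color blob immediately before $X$, so one must be careful to identify that blob's ``next same-color neighbor'' as the blob immediately after $X$ in the alternating sequence. Once this identification is in place, the $<\pi$ angular span from the hypothesis constrains $\cvx(X)$ to the desired interval, and the rest of the proof is assembly.
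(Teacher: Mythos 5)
Your proposal is correct and follows essentially the same route as the paper's proof: the hypothesis gives the angle bound directly, visibility follows from confining the segment to the interval between its endpoints and confining $\cvx(X)$ to the interval spanned by $X$ via the hypothesis applied to the opposite-color blob before $X$, and the non-crossing/shared-endpoint conditions follow from the disjointness (or single-point overlap when $|X'|=1$) of the radial intervals of consecutive same-color jump edges. No substantive differences to report.
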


\begin{figure}
\centering
\includegraphics[width=9cm]{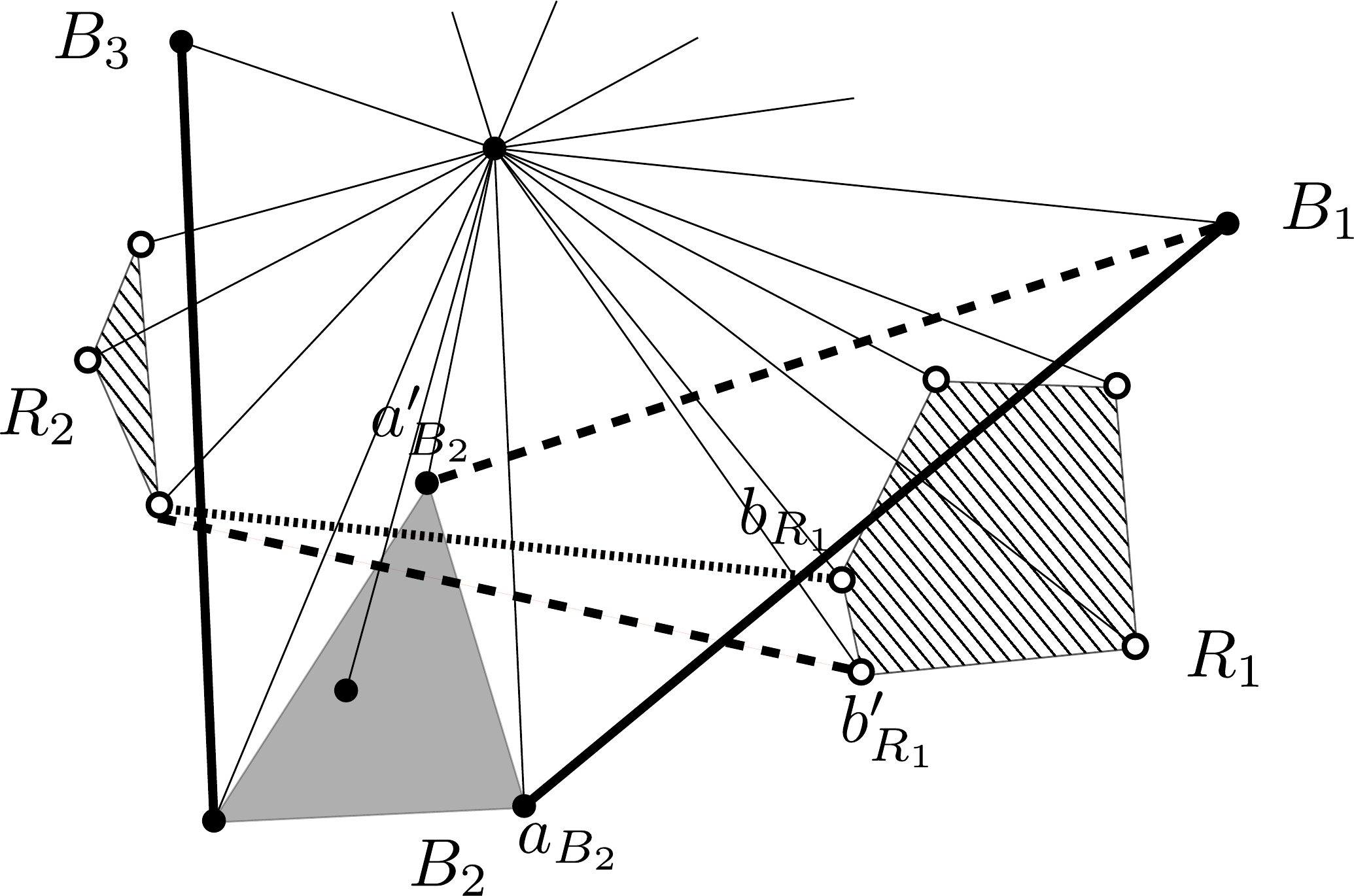}
\caption{In the proof of Lemma \ref{lem:no-4-forcing}, if there is a 4-forcing on the red jump edge between $R_1$ and $R_2$, then the blue-red crossing between the jump edge from $B_1$ to $B_2$ and the jump edge from $R_1$ to $R_2$ can be removed by replacing $b_{R_1}$ by $b'_{R_1}$ and replacing $a_{B_2}$ by $a'_{B_2}$.}
\label{fig:no-4-forcing-proof}
\end{figure}

\begin{lemma}
  \label{lem:no-4-forcing}
  If a radial order contains no red monster-jump and no blue monster-jump and, for each blob $X$, the angle from the last point in $X$ to the first point in the next blob of the same color is less than $\pi$, then there exists a jump configuration which contains no 4-forcing.
  \begin{proof}
    A \emph{blue-red crossing} in a jump configuration occurs when there are consecutive blobs $B_1,R_1,B_2,R_2$ such that $B_1$ and $B_2$ are blue and $R_1$ and $R_2$ are red, and the jump edge from $B_1$ to $B_2$ crosses the jump edge from $R_1$ to $R_2$.

    For each red blob $R$, let $a_R$ be the first point in $R$.
    For each blue blob $B$, let $b_B$ be the last point in $B$.

    For each red blob $R$ choose $b_R\in R$, and for each blue blob $B$ choose $a_B\in B$ such that the collection $J$ of edges consisting of, for each blob $X$, $(b_X,a_{X'})$, where $X'$ is the next blob of the same color, is a valid jump configuration, and that the number of blue-red crossings is minimized, with ties broken by minimizing the number of 4-crossings.
    Note that such a jump configuration exists by Lemma \ref{lem:jump-config-exists}.

    Suppose that the jump configuration $J$ contains a 4-forcing.
    By swapping the colors and reflecting the plane, if necessary (so that we preserve the fact that there are no red or blue monster-jumps), we may assume that there are consecutive blobs $B_1,R_1,B_2,R_2,B_3$ such that $B_1,B_2,B_3$ are blue and $R_1,R_2$ are red, and that the red jump edge $(b_{R_1},a_{R_2})$ from $R_1$ to $R_2$ is the center edge of a 4-forcing.
    Then $(b_{B_1},a_{B_2})$ and $(b_{B_2},a_{B_3})$ both cross $(b_{R_1},a_{R_2})$, $a_{B_2}$ and $b_{B_2}$ are on the same side of the line $\ell$ through $b_{R_1}$ and $a_{R_2}$, and $B_2$ contains points on both sides of $\ell$.
    
    If $b'_{R_1}$ is the last point in $R_1$, then replacing $(b_{R_1},a_{R_2})$ with $(b'_{R_1},a_{R_2})$ in $J$ will give another valid jump configuration $J'$, and cannot increase the number of blue-red crossings, because the only blue-red crossing that a jump edge from $R_1$ to $R_2$ could be involved in is with the jump edge from $B_1$ to $B_2$, which $(b_{R_1},a_{R_2})$ already crosses.
    Therefore, by minimality of $J$, $(b'_{R_1},a_{R_2})$ must be the center edge of a 4-forcing in $J'$.

    This means that $a_{B_2}$ and $b_{B_2}$ are both on the same side of the line $\ell'$ through $b'_{R_1}$ and $a_{R_2}$, and $B_2$ contains points on both sides of $\ell'$, and that $b_{B_1}$ and $a_{B_3}$ are on the opposite side of $\ell'$ from $a_{B_2}$ and $b_{B_2}$.
    There is some point $a'_{B_2}\in B_2$ on the same side of $\ell'$ as $b_{B_1}$, which $b_{B_1}$ sees with respect to $B_2$.
    The angle from $b_{B_1}$ to $a'_{B_2}$ is at most the angle from $b_{B_1}$ to the second to last point in $B_2$, which is less than $\pi$ because $B_1$ to $B_2$ is not a blue monster-jump.
    Thus $a'_{B_2}$ sees $b_{B_1}$ with respect to $B_1$, so $(b_{B_1},a'_{B_2})$ is a valid jump edge, and replacing $(b_{B_1},a_{B_2})$ by $(b_{B_1},a'_{B_2})$ in $J'$ gives a valid jump configuration $J''$.
    The only blue-red crossing which $(b_{B_1},a'_{B_2})$ can be involved in is with a jump edge from $R_1$ to $R_2$, and $(b_{B_1},a'_{B_2})$ does not cross $(b'_{R_1},a_{R_2})$, so $J''$ has fewer blue-red crossings than the chosen jump configuration $J$, a contradiction.
    See Figure \ref{fig:no-4-forcing-proof}.
  \end{proof}
\end{lemma}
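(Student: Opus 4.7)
The plan is to take a jump configuration $J$ that is extremal with respect to a carefully chosen objective, and to show that if $J$ has a 4-forcing then a local modification produces a strictly better configuration. I will minimize, lexicographically, the pair (number of blue-red crossings, number of 4-forcings), where a \emph{blue-red crossing} occurs when consecutive blobs $B_1, R_1, B_2, R_2$ (with $B_1, B_2$ blue and $R_1, R_2$ red) have the property that the jump edge from $B_1$ to $B_2$ crosses the jump edge from $R_1$ to $R_2$. Lemma \ref{lem:jump-config-exists} guarantees that at least one valid jump configuration exists, so the minimum is attained.

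Suppose for contradiction that $J$ has a 4-forcing. By swapping the colors and reflecting the plane, which preserves the monster-jump hypothesis, I may assume the center edge is a red jump edge $(b_{R_1}, a_{R_2})$ with surrounding blue blobs $B_1, B_2, B_3$, and that $(b_{B_1}, a_{B_2})$ and $(b_{B_2}, a_{B_3})$ both cross $(b_{R_1}, a_{R_2})$. The strategy is to reroute one edge at a time. First I try replacing $b_{R_1}$ by the last point $b'_{R_1}$ of $R_1$; this gives a valid jump configuration and cannot create a new blue-red crossing, since the only blue-red crossing involving an edge from $R_1$ to $R_2$ is with $(b_{B_1}, a_{B_2})$, which the new edge still crosses. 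By minimality, the new red edge is still the center of a 4-forcing; this pins down the geometry of $B_2$ relative to the line $\ell'$ through $b'_{R_1}$ and $a_{R_2}$, and in particular forces $B_2$ to contain a point on the same side of $\ell'$ as $b_{B_1}$.

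Next I pick such a point $a'_{B_2} \in B_2$ that $b_{B_1}$ sees with respect to $B_2$, and reroute the blue edge to $(b_{B_1}, a'_{B_2})$. Validity of this new jump edge reduces to an angle check: the angle from $b_{B_1}$ to $a'_{B_2}$ is bounded by the angle from $b_{B_1}$ to the second-to-last point of $B_2$, which the no-blue-monster-jump hypothesis forces to be less than $\pi$. The new blue edge lies on the correct side of $\ell'$ and hence does not cross $(b'_{R_1}, a_{R_2})$, so its only possible blue-red crossing would be with some edge between $R_1$ and $R_2$; this yields a configuration with strictly fewer blue-red crossings than $J$, a contradiction.

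The main obstacle is the angle check in the last step. It is precisely there that the definitions of red and blue monster-jump --- including their asymmetry --- are calibrated so that the color-swap/reflection reduction at the start leaves us with exactly the angle hypothesis needed to certify $(b_{B_1}, a'_{B_2})$ as a valid jump edge. A secondary issue is checking that the two-step modification produces a valid jump configuration at each intermediate stage (no self-crossings among edges of the same color, proper endpoint behavior at blobs of size one); these follow from local checks analogous to those in the proof of Lemma \ref{lem:jump-config-exists}.
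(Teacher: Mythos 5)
Your proposal is correct and follows essentially the same argument as the paper's proof: the same lexicographic extremal choice (blue-red crossings, then 4-forcings), the same two-step rerouting (first moving the red endpoint to the last point of $R_1$, then moving the blue endpoint of $(b_{B_1},a_{B_2})$ to a point of $B_2$ on the other side of $\ell'$), and the same use of the no-blue-monster-jump hypothesis to certify the new blue edge. The one detail the paper makes explicit that you leave implicit is that the minimization is restricted to configurations where each blue edge leaves from the \emph{last} point of its blob and each red edge arrives at the \emph{first} point of its blob, which is what makes the segment in the monster-jump definition coincide with the rerouted red jump edge and hence makes the angle check go through.
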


\subsection{Avoiding Monster-Jumps}
We have shown that if we choose $p$ such that the radial order about $p$ gives us a jump configuration with no monster-jumps, then we can complete this jump configuration to a pair of monochromatic spanning cycles with no 4-crossings. In this section we show that it is possible to choose a point $p$ that avoids monster-jumps.
This is broken down into two cases: when the blue and red convex hulls properly overlap (Lemma \ref{lem:no-monster-jump-overlap}) and when the red convex hull contains the blue convex hull (Lemma \ref{lem:no-monster-jump-contains}).
The only other alternative is that the red and blue convex hulls are disjoint, in which case the desired spanning cycles exist trivially.

\begin{figure}
\centering
\includegraphics[width=4cm]{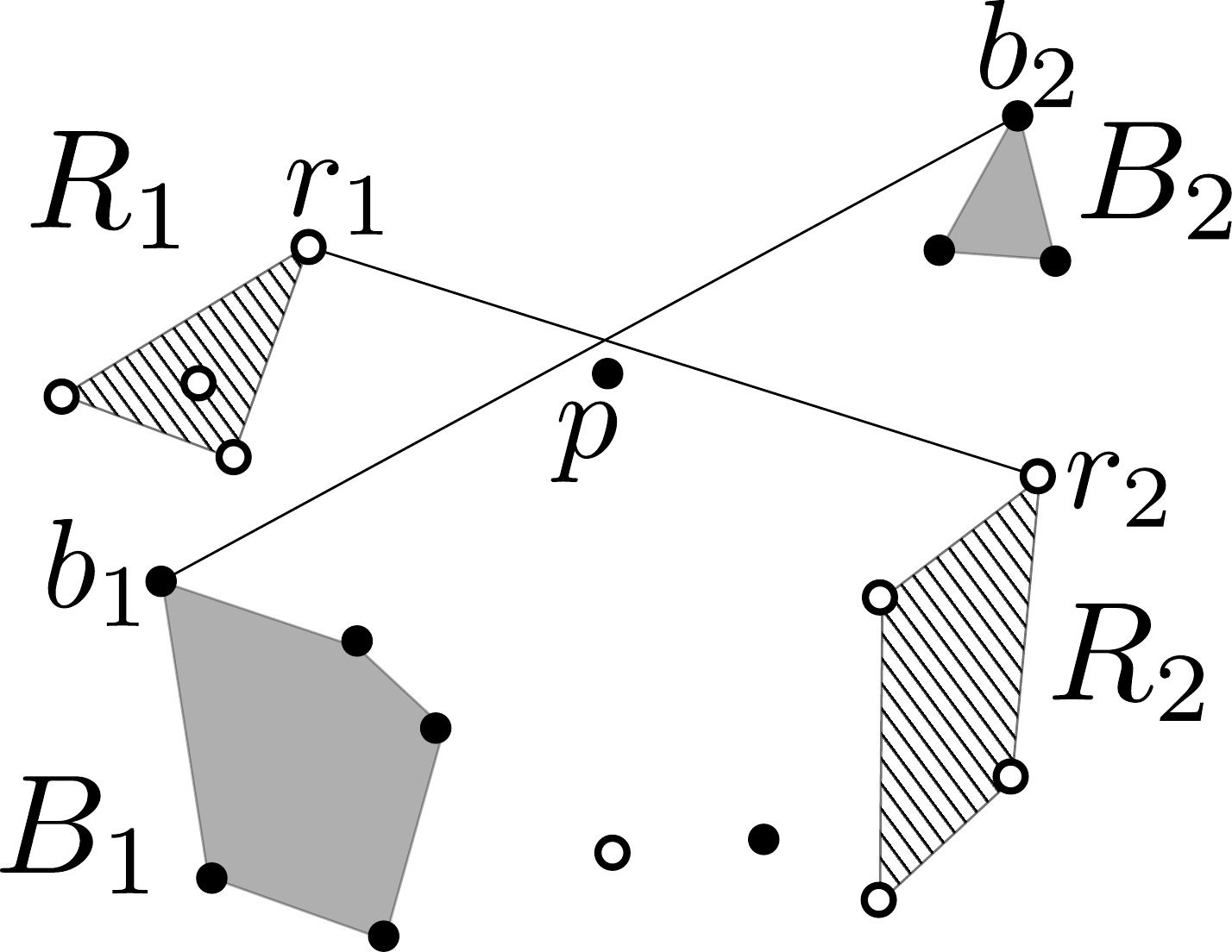}
\caption{The choice of $p$ when the red and blue convex hulls properly overlap (Lemma \ref{lem:no-monster-jump-overlap}).}
\label{fig:overlap-case}
\end{figure}

\begin{lemma}
  \label{lem:no-monster-jump-overlap}
  Suppose the bichromatic point set $S$ contains at least three red points and at least three blue points and that the convex hulls of the red points and of the blue points properly overlap; i.e., the intersection of the convex hulls is non-empty, and the blue convex hull is not contained in the red convex hull, nor vice-versa.
  Then after possibly swapping the color classes, there exists a point $p$ in the intersection of the interior of the convex hulls such that the radial order about $p$ contains neither a red nor a blue monster-jump.
  \begin{proof}
    Note that because the convex hulls properly overlap, the boundaries of the convex hulls must intersect at some point $q$.
    Let $(r_1,r_2)$ be the red segment containing $q$, such that $r_2$ follows $r_1$ in the clockwise ordering of the vertices of the red convex hull.
    Similarly, let $(b_1,b_2)$ be the blue segment containing $q$, such that $b_2$ follows $b_1$ in the clockwise ordering of the vertices of the blue convex hull.
    By swapping colors if necessary, we may assume that $b_1,r_1,b_2,r_2$ appear in that order in the (clockwise) radial order about $q$.
    Therefore $r_1$ is outside the blue convex hull and $b_2$ is outside the red convex hull.

    For all $\epsilon>0$, there is a point $p$ in the intersection of the interiors of the red and blue convex hulls such that $|p-q|<\epsilon$.

    We will show that for $\epsilon$ sufficiently small, there is no red or blue monster-jump in the radial order about $p$.
    If $\epsilon$ is sufficiently small, then the radial orders of the bichromatic point set with respect to $p$ and with respect to $q$ coincide.
    
    Any point between $b_1$ and $b_2$ in the radial order with respect to $p$ is outside the blue convex hull, and hence must be red.
    Therefore there is exactly one red blob $R_1$ between $b_1$ and $b_2$, and $r_1\in R_1$.
    Similarly there is exactly one blue blob, $B_2$, between $r_1$ and $r_2$, and $b_2\in B_2$.

    Let $B_1$ be the blue blob containing $b_1$ and $R_2$ be the red blob containing $r_2$.
    Note that $b_1$ is the last point in $B_1$ and $b_2$ is the first point in $B_2$, and the blob $R_1$ lies entirely on one side of the line through $b_1$ and $b_2$, so $R_1$ to $R_2$ is not a red monster-jump.
    Similarly, $r_1$ is the last point in $R_1$, $r_2$ is the first point in $R_2$, and $B_2$ lies entirely on one side of the line between $r_1$ and $r_2$, so $B_1$ to $B_2$ is not a blue monster-jump.
    
    Let $B$ be any blue blob which is not $B_1$, and let $B'$ be the next blue blob after $B$.
    If $|B'|=1$, then $B$ to $B'$ is not a blue monster-jump, so suppose $|B'|>1$.
    Then the last point, $b$, in $B$ and the second to last point, $b'$, in $B'$ lie in the interval between $b_2$ and $b_1$ in the radial order with respect to $p$.
    Note that $b\neq b_1$ because $B\neq B_1$ and $b'\neq b_1$, because $b_1$ is the not the second to last point in $B_1$.
    Therefore if $b'_1$ is the point after $b_1$ in the radial order with respect to $p$, then the angle from $b$ to $b'$ is at most the angle from $b'_1$ to $b_2$, which, for $\epsilon$ sufficiently small, is less than $\pi$.
    So, $B$ to $B'$ is not a blue monster-jump.
    By a symmetric argument, if $R$ is a red blob which is not $R_1$ and $R'$ is the next red blob after $R$ and $\epsilon$ is sufficiently small, then $R$ to $R'$ is not a red monster-jump.
    Therefore, if $\epsilon$ is sufficiently small, then the radial order of the bichromatic point set with respect to $p$ contains no red or blue monster-jump.
  \end{proof}
\end{lemma}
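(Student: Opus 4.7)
The plan is to pick $p$ very close to a point $q$ where the boundaries of the two convex hulls cross. Since the hulls properly overlap but neither contains the other, such a $q$ exists. Let $(r_1,r_2)$ and $(b_1,b_2)$ be the red and blue hull edges containing $q$, with the clockwise cyclic order around $q$ being $b_1,r_1,b_2,r_2$ (after possibly swapping the color classes to achieve this). Then $r_1$ lies outside the blue convex hull and $b_2$ lies outside the red convex hull. I will take $p$ in the local angular sector of $q$ that lies inside both hull interiors, with $|p-q|<\epsilon$ for some small $\epsilon$ to be chosen.

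The first step is to verify that for $\epsilon$ small enough, the cyclic radial order about $p$ matches that about $q$, and that near $q$ it partitions into four consecutive blobs $B_1,R_1,B_2,R_2$ where $b_1$ is the last point of $B_1$, $b_2$ is the first point of $B_2$, $r_1$ is the last point of $R_1$, and $r_2$ is the first point of $R_2$. The reason is that $(b_1,b_2)$ is a blue hull edge, so any point strictly between $b_1$ and $b_2$ in the short radial arc lies outside the blue convex hull and therefore must be red; symmetrically for $(r_1,r_2)$. Once this structure is in place, the only transitions that could possibly support a monster-jump are $R_1\to R_2$ and $B_1\to B_2$. A red monster-jump at $R_1\to R_2$ would require the segment $(b_1,b_2)$ to cross the convex hull of $R_1$; but $R_1$ lies strictly on the opposite side of the line through $b_1$ and $b_2$ from the rest of the blue hull, so $(b_1,b_2)$ cannot enter that convex hull. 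A symmetric argument using $(r_1,r_2)$ and the convex hull of $B_2$ rules out the blue monster-jump at $B_1\to B_2$.

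For every other blob-to-blob transition, the angle condition defining a monster-jump must fail. The two relevant points (the last point of the current blob and the second-to-last point of the next blob of the same color) both lie strictly inside the long radial arc running from $b_2$ clockwise back to $b_1$, and neither is permitted to coincide with $b_1$; a symmetric statement holds on the red side. The total long-arc angle approaches $\pi$ from above as $p\to q$, while the angular gap between the extremum $b_1$ and the candidate endpoints is bounded below by a positive constant depending only on the input configuration. Hence for $\epsilon$ small enough the sub-arc in question has length strictly less than $\pi$. I expect this uniform angular bookkeeping---choosing a single $\epsilon$ that simultaneously handles the finitely many relevant transitions---to be the main technical obstacle, though it ultimately reduces to a continuity argument once the cyclic radial order about $p$ has been stabilized.
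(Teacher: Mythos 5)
Your proposal follows essentially the same route as the paper's proof: perturb into the interior of both hulls from a transversal crossing point $q$ of the two hull boundaries, use the hull edges $(b_1,b_2)$ and $(r_1,r_2)$ to show the hull-intersection condition fails for the two special transitions $R_1\to R_2$ and $B_1\to B_2$, and kill all remaining transitions by observing that the relevant pairs of points lie in an arc that excludes a fixed neighborhood of the extremal point ($b_1$ on the blue side, $r_2$ on the red side), so its angle stays below $\pi$ for $p$ close enough to $q$. The "uniform angular bookkeeping" you flag is handled in the paper exactly as you anticipate, by a single continuity argument once the radial order about $p$ stabilizes to that about $q$.
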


\begin{lemma}
  \label{lem:no-monster-jump-contains}
  Suppose the bichromatic point set $S$ contains at least three red points and at least three blue points and that the red convex hull contains the blue convex hull.
  Then there exists a point $p$ in the interior of the blue convex hull such that the radial order of the bichromatic point set with respect to $p$ contains no red or blue monster-jump.

  \begin{proof}
    Let $(b_1,\ldots,b_k)$ be the vertices of the blue convex hull in clockwise order.
    For $i\in\{1,\ldots,k\}$, let $H_i$ be the open half-plane bounded by the line through $b_i$ and $b_{i+1\pmod{k}}$ which contains no blue points.
    Note that $\bigcup_{i=1}^k H_i\setminus H_{i+1\pmod{k}}$ is the complement of the blue convex hull, and so, for some $i\in\{1,\ldots,k\}$, $H_i\setminus H_{i+1\pmod{k}}$ contains a red point.
    Without loss of generality, there is some red point $r_1\in H_k \setminus H_1$.
    Because $b_1$ is in the interior of the red convex hull, there is some red point in $H_1$; choose a red point $r_2$ in $H_1$ minimizing the angle from $r_2$ to $b_2$ with respect to $b_1$. 

\begin{figure}
\centering
\includegraphics[width=4.5cm]{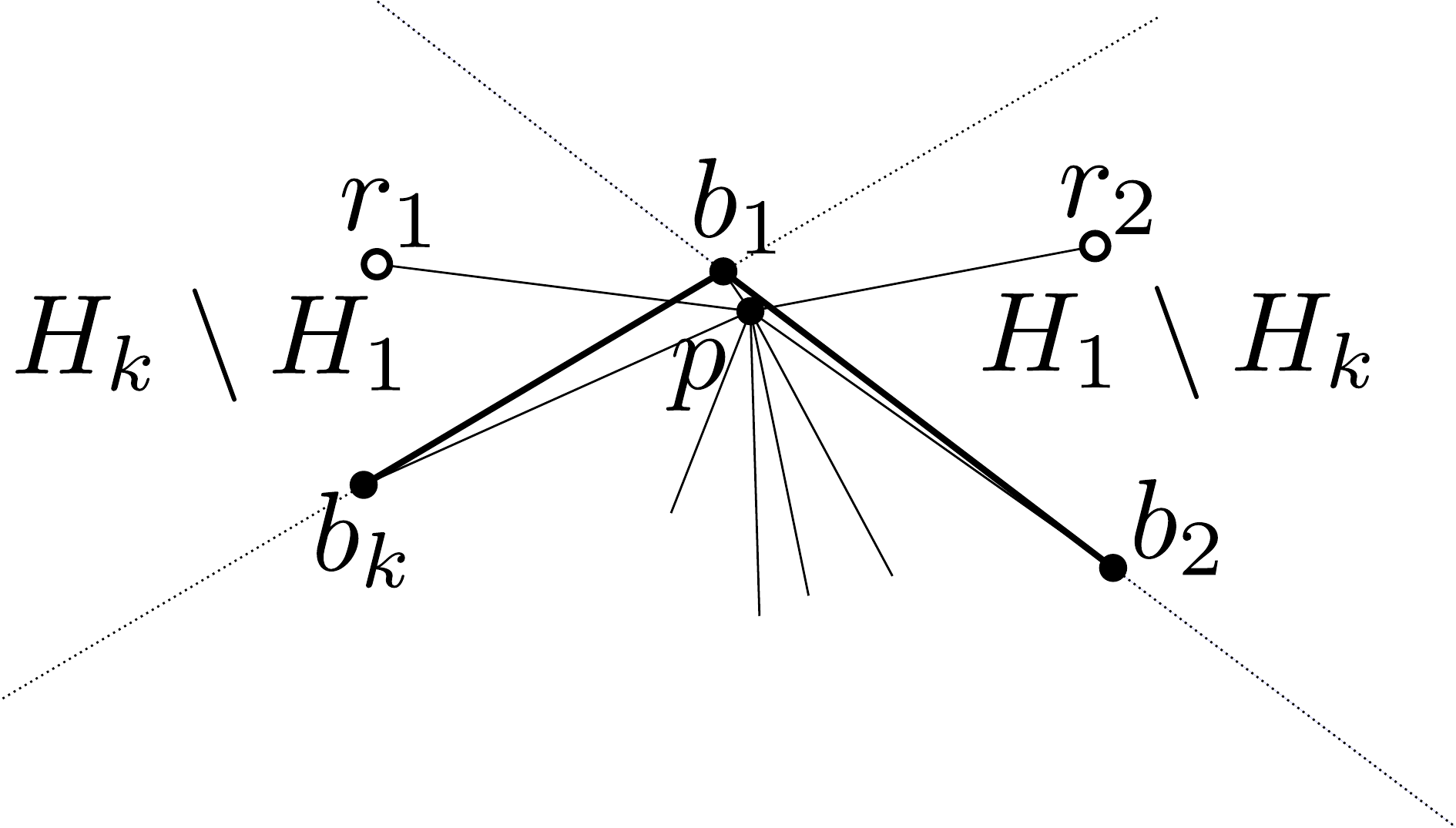}
\caption{Case 1: The blue convex hull is contained in the red convex hull, and
there are red points $r_1$ in $H_k\setminus H_1$ and $r_2$ in $H_1\setminus H_k$.}
\label{fig:containment-case-1}
\end{figure}

    \textbf{Case 1:} $r_2\in H_1\setminus H_k$ (see Figure \ref{fig:containment-case-1}). For $\delta>0$, let $v_\delta$ be the vector $b_k-b_1$ rotated counter-clockwise by $\delta$, and, for $\epsilon>0$, let $p=b_1 + \epsilon v_\delta$, and consider the radial order of the bichromatic points with respect to $p$.
    For $i\in\{1,2,k\}$, let $B_i$ be the blue blob containing $b_i$.
    For $i\in\{1,2\}$, let $R_i$ be the red blob containing $r_i$.

    If $\epsilon$ and $\delta$ are sufficiently small, $R_1$ lies entirely on one side of the line through $b_1$ and $b_k$, which are the first point of $B_1$ and last point of $B_k$ respectively, so $R_1$ to $R_2$ is not a red monster-jump.
    Again, if $\epsilon$ and $\delta$ are sufficiently small, $R_2$ lies entirely on one side of the line through $b_1$ and $b_2$, so $R_2$ to the next red blob is not a red monster-jump.
    If $R$ is a red blob that is not $R_1$ or $R_2$ and $R'$ is the next red blob and $\epsilon$ and $\delta$ are sufficiently small, then $R$ lies entirely on the same side of the line through $b_1$ and $b_2$ as $r_1$, so the angle between the second point of $R$ and the first point of $R'$ is at most the angle between the second point of $R$ and $r_1$, which is less than $\pi$.
    Therefore the radial order with respect to $p$ contains no red monster-jump.

    For $\epsilon$ and $\delta$ sufficiently small, the points before and after $b_1$ are both red, so $B_1=\{b_1\}$.
    Therefore $B_k$ to $B_1$ is not a blue monster-jump.
    Let $b_k'$ be the previous blue point before $b_k$ in the radial order.
    For any blue blob $B$ that is not $B_k$, if $B'$ is the next blue blob, then the angle between the last point $B$ and the second last point of $B'$ is at most the angle between $b_1$ and $b'_k$, which is less than $\pi$ if $\epsilon$ and $\delta$ are sufficiently small.
    Therefore the radial order with respect to $p$ contains no blue monster-jump.

\begin{figure}
\centering
\includegraphics[width=5cm]{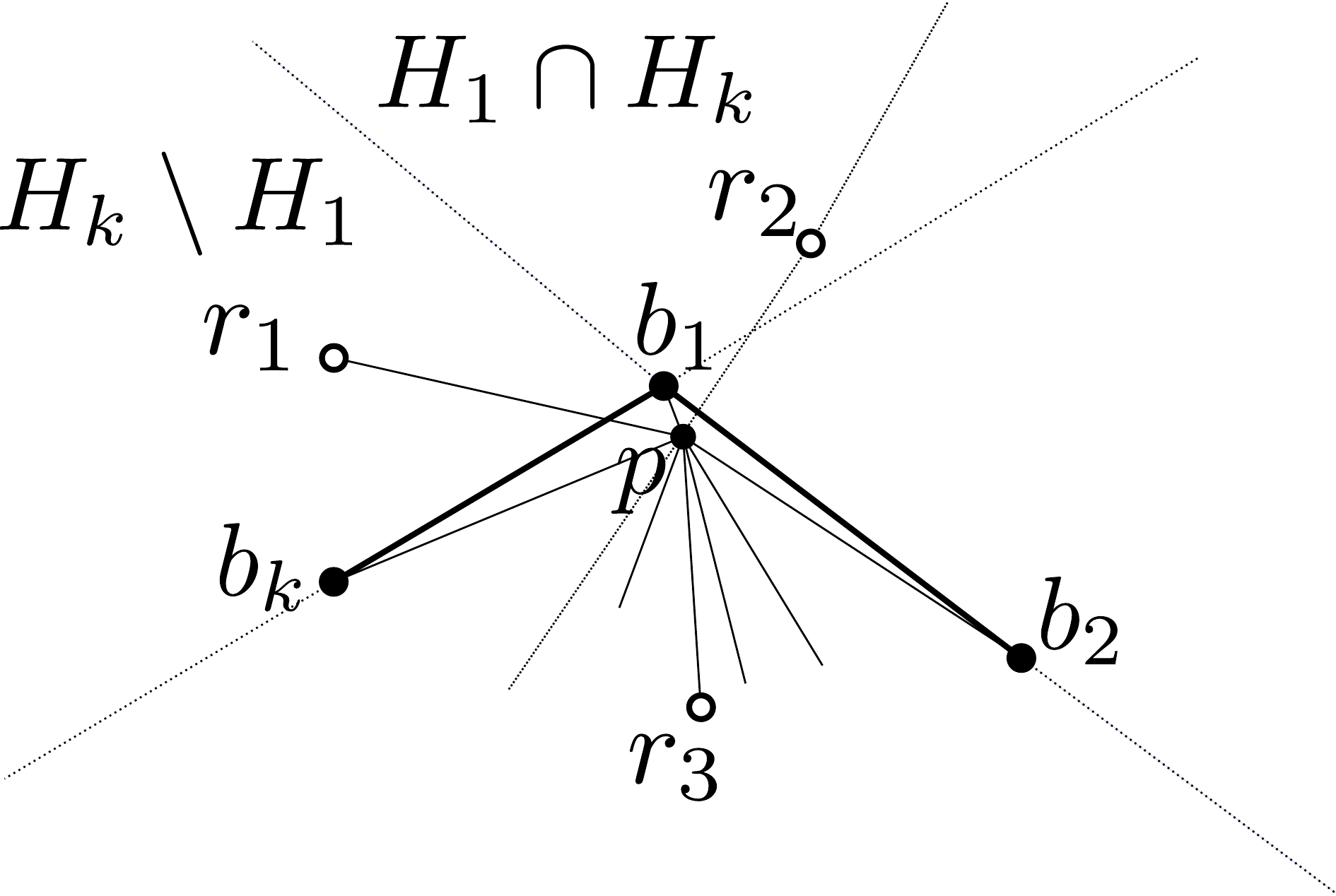}
\caption{Case 2: The blue convex hull is contained in the red convex hull and
there are red points $r_1$ in $H_k\setminus H_1$ and $r_2$ in $H_1\cap H_k$.}
\label{fig:containment-case-2}
\end{figure}

    \textbf{Case 2:} $r_2\in H_1\cap H_k$ (see Figure \ref{fig:containment-case-2}).
    Let $\ell$ be the line through $r_2$ and $b_1$.
    Because $b_1$ is in the interior of the red convex hull, there is a red point $r_3$ on the opposite side of $\ell$ from $r_1$.
    However, $r_3$ cannot lie between $r_2$ and $b_2$ in the radial order with respect to $b_1$, so $r_3$ lies between $b_2$ and $b_k$ in the radial order with respect to $b_1$, and the angle from $r_2$ to $r_3$ with respect to $b_1$ is less than $\pi$.
    
    For $\delta>0$, let $v_\delta$ be the vector from $r_1$ to $b_1$ rotated counter-clockwise by $\delta$.
    As in case 1, for $\epsilon>0$, let $p=b_1+\epsilon v_\delta$, and consider the radial order of the bichromatic points with respect to $p$.
    For $i\in\{1,2,k\}$, let $B_i$ be the blue blob containing $b_i$, and, for $i\in\{1,2\}$, let $R_i$ be the red blob containing $r_i$.

    If $\epsilon$ and $\delta$ are sufficiently small, then $R_1$ lies entirely on one side of the line between $b_k$ and $b_1$, which are the last point of $B_k$ and first point of $B_1$ respectively, so $R_1$ to the next red blob is not a red monster-jump.
    For $\epsilon$ and $\delta$ sufficiently small, $b_1$ is the point before $r_2$ and $b_2$ is the point after $r_2$ so $R_2=\{r_2\}$. Therefore, $R_2$ to the next red blob is not a red monster-jump.
    If $R$ is a red blob that is neither $R_1$ nor $R_2$, and $R'$ is the next red blob, then $R$ lies entirely on the same side of the line through $b_1$ and $b_2$ as $r_1$, so the angle from the second point in $R$ to the first point in $R'$ is at most the angle from $b_2$ to $r_1$, which is less than $\pi$.
    Therefore the radial order with respect to $p$ contains no blue monster-jump.

    If $\epsilon$ and $\delta$ are sufficiently small, then the points before and after $b_1$ are both red, so $B_1=\{b_1\}$.
    Therefore $B_k$ to $B_1$ is not a blue monster-jump.
    If $\epsilon$ and $\delta$ are sufficiently small, then $r_3$ lies between $b_2$ and $b_k$, and the angle from $b_1$ to $r_3$ is less than $\pi$.
    The blob $B_2$ ends before $r_3$, so the angle from $b_1$ to any point in $B_2$ is less than $\pi$, and $B_1$ to $B_2$ is not a blue monster-jump.
    If $B$ is a blue blob that is not $B_k$ or $B_1$, and $B'$ is the next blue blob, then $B$ and $B'$ lie between $b_2$ and $b_k$ in the radial order, so the angle from any point in $B$ to any point in $B'$ is less than $\pi$, and hence $B$ to $B'$ is not a blue monster-jump.
    Therefore the radial order with respect to $p$ contains no blue monster-jump.

    In both cases, $p$ can be chosen such that the radial order with respect to $p$ contains no red monster-jump and no blue monster-jump.
  \end{proof}
\end{lemma}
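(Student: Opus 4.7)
The plan is to place $p$ very close to a carefully chosen vertex of the blue convex hull, perturbed slightly in a direction tailored to the local red geometry. The guiding intuition is that from a point $p$ infinitesimally close to a blue vertex $b_1$, the radial order of all other points agrees with the radial order as seen from $b_1$ itself, while $b_1$ appears in the radial order in the direction \emph{opposite} to the perturbation; so by choosing the perturbation direction I can sandwich $b_1$ between any two points that, viewed from $b_1$, lie on opposite sides.

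First, for each edge $b_i b_{i+1}$ of the blue hull I introduce the open half-plane $H_i$ on the outside of that edge (the side containing no blue points). The complement of the blue hull decomposes into ``side'' regions $H_i\setminus H_{i+1}$ and ``corner'' regions $H_i\cap H_{i+1}$ (the latter sitting near vertex $b_{i+1}$). Since the red hull contains the blue hull and red points outside the blue hull exist, some side region contains a red point; after relabeling, take $r_1\in H_k\setminus H_1$. Since $b_1$ is interior to the red hull, $H_1$ also contains a red point; pick $r_2\in H_1$ minimizing the angle $\angle r_2\, b_1\, b_2$.

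Now I split on whether $r_2\in H_1\setminus H_k$ (Case 1) or $r_2\in H_1\cap H_k$ (Case 2), and in each case choose a perturbation vector $v_\delta$ and set $p=b_1+\epsilon v_\delta$ for small $\epsilon,\delta>0$. In Case 1, take $v_\delta$ to be $b_k-b_1$ rotated counter-clockwise by $\delta$, placing $p$ slightly inside the blue hull near edge $b_k b_1$; this makes the neighbors of $b_1$ in the radial order red. In Case 2, I first use that $b_1$ is interior to the red hull to locate a red point $r_3$ on the opposite side of the line $\overline{b_1 r_2}$ from $r_1$; $r_3$ then necessarily lies in the radial arc from $b_2$ to $b_k$ at $b_1$. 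I then take $v_\delta$ to be the vector from $r_1$ to $b_1$ rotated counter-clockwise by $\delta$, so that $b_1$ is sandwiched between a point of the red blob $R_1$ and $r_2$ itself; this forces $B_1=\{b_1\}$ and $R_2=\{r_2\}$ to be singletons.

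The remaining task is to verify, for $\epsilon,\delta$ sufficiently small, that no red or blue monster-jump arises. The singleton $B_1=\{b_1\}$ immediately rules out any monster-jump whose second blob is $B_1$. The red blob $R_1$ (and $R_2$ in Case 2) is adjacent to $b_1$ in the radial order and is constrained to lie on one side of a line through two consecutive blue-hull vertices, so any candidate monster-jump involving it fails the geometric crossing condition. For every other consecutive same-color pair of blobs, both blobs lie within the angular arc from $b_2$ to $b_k$ as seen from $b_1$, so the relevant monster-jump angle is bounded by a fixed quantity strictly less than $\pi$ at $b_1$, and hence also at $p$ once $\epsilon,\delta$ are small. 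The main obstacle is the bookkeeping in Case 2: inserting $\{r_2\}$ as a new singleton red blob between $b_1$ and $b_2$ changes which blobs are consecutive, and a monster-jump condition must be rechecked at each shifted adjacency, notably using the auxiliary point $r_3$ to bound the angle of $B_2$ away from $\pi$.
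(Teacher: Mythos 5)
Your proposal is correct and follows essentially the same route as the paper: the same half-planes $H_i$, the same choices of $r_1$, $r_2$ (and $r_3$ in Case 2), the same two cases with the same perturbation vectors $v_\delta$, and the same verification structure (singleton $B_1$, the special blobs $R_1,R_2$ handled by lines through consecutive blue hull vertices, generic blobs confined to the arc from $b_2$ to $b_k$, and $r_3$ bounding the angle of $B_2$). No substantive differences to report.
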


\begin{proof}[Proof of Theorem \ref{thm:main}]
If the red convex hull and the blue convex hull are disjoint, then any pair of red and blue spanning cycle will be disjoint, so assume the red and blue convex hulls intersect.
By Lemma \ref{lem:complete-jump-config}, it suffices to show that for some point $p$, the radial order about $p$ contains a jump-configuration with no 4-forcing.
By Lemma \ref{lem:no-4-forcing}, it suffices to show that there exists a point $p$ such that the radial order about $p$ contains no red or blue monster-jump.

Either the red and blue convex hulls properly overlap, or the blue convex hull is contained in the red convex hull, or the red convex hull is contained in the blue convex hull.
If the red and blue convex hulls properly overlap, then by Lemma \ref{lem:no-monster-jump-overlap}, there is a point $p$ such that the radial order with respect to $p$ contains no red or blue monster-jump.
If the red convex hull is contained in the blue convex hull, we may swap the colors, so that the blue convex hull is contained in the red convex hull, and in that case, by Lemma \ref{lem:no-monster-jump-contains}, there is a point $p$ such that the radial order with respect to $p$ contains no red or blue monster-jump.
\end{proof}

\bibliography{references}{}
\bibliographystyle{plain}
\end{document}